\newcommand{\be}{\begin{equation}}
\newcommand{\ee}{\end{equation}}
\newcommand{\ba}{\begin{eqnarray}}
\newcommand{\ea}{\end{eqnarray}}
\newcommand{\bi}{\begin{itemize}}
\newcommand{\ei}{\end{itemize}}
\newcommand{\bn}{\begin{enumerate}}
\newcommand{\en}{\end{enumerate}}
\newcommand{\bp}{\begin{proof}}
\newcommand{\ep}{\end{proof}}
\newcommand{\mc}{\ensuremath{\mathcal}}
\newcommand{\mf}{\ensuremath{\mathfrak}}
\newcommand{\ov}{\ensuremath{\overline}}
\newcommand{\La}{\ensuremath{\Lambda }}
\newcommand{\la}{\ensuremath{\lambda }}
\newcommand{\ka}{\ensuremath{\kappa }}
\newcommand{\eps}{\ensuremath{\epsilon }}
\renewcommand{\bm}{\ensuremath{\mathbb }}
\newcommand{\ip}[2]{\ensuremath{\langle {#1} , {#2} \rangle}}
\newcommand{\dom}[1]{\ensuremath{\mathrm{Dom} ({#1}) }}
\renewcommand{\dim}[1]{\ensuremath{\mathrm{dim} \left( {#1} \right) }}
\newcommand{\ran}[1]{\ensuremath{\mathrm{Ran} ({#1}) }}
\renewcommand{\ker}[1]{\ensuremath{\mathrm{Ker} ({#1}) }}
\newtheorem{thm}{Theorem}
\newtheorem{claim}{Claim}
\newtheorem{defn}{Definition}
\newtheorem{cor}{Corollary}
\begin{document}

\bibliographystyle{unsrt}

\title{Quantum Uncertainty and the Spectra of Symmetric Operators}

\author{R.T.W. Martin and A. Kempf}

\address{Departments of Applied Mathematics and Physics\\ University of Waterloo\\
Waterloo, Ontario N2L 3G1, Canada \\
phone: +1 519 888 4567 ext.35462 \\ fax: +1 519 746 4319}

\email{rtwmartin@gmail.com}

\begin{abstract}

In certain circumstances, the uncertainty, $\Delta S [\phi]$, of a
quantum observable, $S$, can be bounded from below by a finite
overall constant $\Delta S>0$, \emph{i.e.}, $\Delta S [\phi] \geq \Delta
S$, for all physical states $\phi$. For example, a finite lower
bound to the resolution of distances has been used to model a
natural ultraviolet cutoff at the Planck or string scale. In
general, the minimum uncertainty of an observable can depend on
the expectation value, $t=\langle \phi, S \phi\rangle$, through a
function $\Delta S_t$ of $t$, \emph{i.e.}, $\Delta S [\phi]\ge \Delta
S_t$, for all physical states $\phi$ with $\langle \phi, S
\phi\rangle=t$. An observable whose uncertainty is finitely
bounded from below is necessarily described by an operator that is
merely symmetric rather than self-adjoint on the physical domain.
Nevertheless, on larger domains, the operator possesses a family
of self-adjoint extensions. Here, we prove results on the
relationship between the spacing of the eigenvalues of these
self-adjoint extensions and the function $\Delta S_t$. We also
discuss potential applications in quantum and classical
information theory.

\vspace{5mm}   \noindent {\it Key words and phrases} :
self-adjoint extensions of symmetric operators, generalized
observables, finite minimum uncertainty, spectra of symmetric
operatotrs
\vspace{3mm}\\
\noindent {\it 2000 Mathematics Subject Classification} --- 81Q10
(self-adjoint operator theory in quantum theory, including
spectral analysis), 47A10 (general theory of linear operators;
spectrum, resolvent)

\end{abstract}

\maketitle

\section{Introduction}

The uncertainty, $\Delta S[\phi]$, of a quantum observable, $S$,
can possess a finite lower bound $\Delta S>0$, \emph{i.e.}, $\Delta S
[\phi]\ge \Delta S$, for all physical states $\phi$. A simple
example is the momentum operator of a particle confined to a box
with Dirichlet boundary conditions. Since the position uncertainty
is bounded from above, the uncertainty relation implies that the
momentum uncertainty is finitely bounded from below. Another
example arises from general arguments of quantum gravity and
string theory, \cite{grossetal}, which point towards corrections
to the uncertainty relations which are of the type $\Delta x
~\Delta p \ge \frac{\hbar}{2}(1 + \beta (\Delta p)^2+...)$. For
positive $\beta$, this type of uncertainty relation implies a
finite lower bound to the position uncertainty. A Hilbert space
representation and functional analytic investigation of the
underlying type of generalized commutation relations first
appeared in the context of quantum group symmetric quantum
mechanics and quantum field theory, \cite{ak-earliest}, followed
by representations in quantum mechanics and quantum field theory
with undeformed symmetries, \cite{earliest-undeformed}. This made
it possible to implement this type of ultraviolet cutoff in
various quantum mechanical systems, see \emph{e.g.} \cite{brau}, as well
as in quantum field theory, with applications, \emph{e.g.}, in the study
of black hole radiation and inflationary cosmology, see \emph{e.g.}
\cite{bh-infl}.

Our aim in the present paper is to extend the basic functional
analytic understanding of observables whose uncertainty is
finitely bounded from below. We will consider the general case
where $\Delta S$ can be a function $\Delta S_t$ of the expectation
value, $t=\langle \phi, S \phi\rangle$, \emph{i.e.}, $\Delta S [\phi]\ge
\Delta S_t$, for all physical states $\phi$ with $t=\langle \phi,
S \phi\rangle$. As we will explain below, such observables are
necessarily described by operators that are merely symmetric
rather than self-adjoint on their domain in the Hilbert space.
Each such symmetric operator possesses, nevertheless, a family of
self-adjoint extensions to larger domains in the Hilbert space.
The spectra of these self-adjoint extensions are discrete. Our aim
here will be to prove results on the close relationship between
the spacing of the eigenvalues of these self-adjoint extensions
and the function $\Delta S_t$.

\section{Symmetric operators}

    Let $S$ be a closed, symmetric operator defined on a dense domain,
$\dom{S}$, in a separable Hilbert space $\mc{H}$. Recall that the
deficiency indices $(n_+, n_-)$ of $S$ are defined as the dimensions of
the subspaces $\ran{S + z} ^\perp= \ker{S^* -z} $ and $\ran{S-z} ^\perp = \ker{S^* +z}$
respectively where $z$ belongs to the open complex upper half plane (UHP).
The dimensions of these two subspaces are constant for $z$ within the
upper and lower half plane respectively (\cite{Glazman}, section 78). For
$z=i$ we will call $ \mc{D} _+ := \ker{S^* -i}$ and $\mc{D} _- := \ker{S^*
+i}$ the deficiency subspaces of $S$.

    We will let $\sigma (S)$, $\sigma _p (S) $, $\sigma _c (S)$,
$\sigma _r (S)$, and $\sigma _e (S)$ denote the spectrum, and the
point, continuous, residual and essential spectrum of $S$
respectively. Recall that $\sigma (S)$ is defined as the set of
all $\la \in \bm{C}$ such that $(S-\la)$ does not have a bounded
inverse defined on all of $\mc{H}$. The point spectrum $\sigma _p
(S)$ is defined as the set of all eigenvalues, $\sigma _c (S)$ is
here defined as the set of all $\la$ such that $\ran{S-\la}$ is
not closed, $\sigma _r (S)$ is defined as the set of all $\la $
such that $\la \notin \sigma _p (S)$ and $\ran{S-\la }$ is not
dense, and $\sigma _e (S)$ is the set of all $\la $ such that $S-
\la $ is not Fredholm. Recall that a closed, densely defined
operator $T$ is called Fredholm if $\ran{T}$ is closed and if the
dimension of $\ker{T}$ and the co-dimension of $\ran{T}$ are both
finite. If $T$ is unbounded, we include the point at infinity as
part of the essential spectrum. Clearly all the above sets are
subsets of $\sigma (S)$, and $\sigma (S) = \sigma _p (S) \cup
\sigma _c (S) \cup \sigma _r (S)$.

    If $S$ is symmetric, and $\la \in \bm{C} \setminus \bm{R}$,
then it is easy to see that $S-z$ is bounded below by
$\frac{1}{\mathrm{Im} (z)}$. This shows that any non-real $z \in \sigma
(S)$ must belong to the residual spectrum $\sigma _r (S)$ of $S$. If $S$
has finite deficiency indices, then the orthogonal complement of
$\ran{S-z}$ is finite dimensional for any $z \in \bm{C} \setminus \bm{R}$,
which shows that $\sigma _e (S) \subset \bm{R}$.

\noindent    The domain of the adjoint $S ^*$ of $S$ can be
decomposed as (\cite{Glazman}, pg. 98): \be \dom{S^*} = \dom{S}
\dotplus \mc{D} _+ \dotplus \mc{D} _- . \label{eq:neumann} \ee
Here the linear manifolds $\dom{S}$ , $\mc{D} _+$ and $\mc{D} _-$
are non-orthogonal, linearly independent, non-closed subspaces of
$\mc{H}$. The notation $\dotplus $ denotes the non-orthogonal
direct sum of these linear subspaces. If $S$ has finite deficiency
indices, and if the co-dimension of $\ran{S-\la}$ is infinite,
then $\la \in \bm{R}$. Furthermore if $\la \in \ran{S-\la} ^\perp$
then $\ov{\la}$ is an eigenvalue to $S^*$. This and the fact that
the dimension of $\dom{S^*}$ modulo $\dom{S}$ is finite (by the
above equation (\ref{eq:neumann})) allows one to conclude that $\la $
must be an eigenvalue of infinite multiplicity to $S$. Hence if
$\la \in \sigma _e (S)$ then either it is an eigenvalue of
infinite multiplicity or it belongs to the continuous spectrum of
$S$.

\begin{claim}
    If $S$ is a symmetric operator with finite and equal
deficiency indices then $\sigma _e (S) = \sigma _e (S')$,
and $\sigma _c (S) = \sigma _c (S')$ for any
self-adjoint extension $S'$ of $S$ within $\mc{H}$.
\label{claim:espec}
\end{claim}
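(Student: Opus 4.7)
The plan is to reduce both equalities to the single assertion that, for each $\la \in \bm{C}$, the operator $S - \la$ is Fredholm iff $S' - \la$ is Fredholm, and $\ran{S - \la}$ is closed iff $\ran{S' - \la}$ is closed. The structural input is the von Neumann decomposition (\ref{eq:neumann}) together with the standard parameterization of self-adjoint extensions by unitaries $\mc{D}_+ \to \mc{D}_-$: when $n_+ = n_- =: n$, every self-adjoint extension $S'$ of $S$ satisfies $\dom{S'} = \dom{S} \dotplus V$ for some $n$-dimensional subspace $V \subset \mc{D}_+ \dotplus \mc{D}_-$.

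Two easy observations then follow. First, $\ker{S - \la} \subseteq \ker{S' - \la}$ with quotient of dimension at most $n$, so the two kernels are simultaneously finite- or infinite-dimensional. Second, $\ran{S' - \la} = \ran{S - \la} + (S' - \la)V$ with $(S' - \la)V$ at most $n$-dimensional, so the codimensions of the two ranges in $\mc{H}$ differ by at most $n$; and the implication ``$\ran{S - \la}$ closed $\Rightarrow$ $\ran{S' - \la}$ closed'' is trivial, because a sum of a closed and a finite-dimensional subspace is closed.

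The hard part will be the reverse closed-range implication: $\ran{S' - \la}$ closed $\Rightarrow$ $\ran{S - \la}$ closed. I would argue by contradiction using the reduced minimum modulus $\gamma(T) = \inf\{\|Tx\|/\text{dist}(x, \ker{T}) : x \in \dom{T} \setminus \ker{T}\}$, which is positive iff $\ran{T}$ is closed. Assume $\gamma(S - \la) = 0$; then there exist $x_n \in \dom{S}$ with $x_n \perp K := \ker{S - \la}$, $\|x_n\| = 1$, and $\|(S - \la)x_n\| \to 0$. Since $(S - \la)x_n = (S' - \la)x_n$ and $\gamma(S' - \la) > 0$, we get $\text{dist}(x_n, K') \to 0$ for $K' := \ker{S' - \la}$. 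Writing $P_{K'} = P_K + P_{K_1}$ for the orthogonal decomposition with $K_1 := K' \cap K^\perp$, and using $x_n \perp K$, one obtains $x_n - P_{K_1} x_n \to 0$ in norm. The crucial point is that $K_1$ is isomorphic to $K'/K$, which has dimension at most $n$, so $K_1$ is finite-dimensional regardless of the dimensions of $K$ and $K'$. Hence a subsequence of $\{P_{K_1} x_n\}$ converges strongly to some $k^* \in K_1$ with $\|k^*\| = 1$; then $x_n \to k^*$ in norm and $S x_n \to \la k^*$, and closedness of $S$ forces $k^* \in \dom{S}$, giving $k^* \in K' \cap \dom{S} = K$, which combined with $k^* \in K_1 \cap K = \{0\}$ contradicts $\|k^*\| = 1$.

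Combining these, $S - \la$ is Fredholm iff $S' - \la$ is (yielding $\sigma_e(S) = \sigma_e(S')$) and $\ran{S - \la}$ is closed iff $\ran{S' - \la}$ is (yielding $\sigma_c(S) = \sigma_c(S')$). The main obstacle is the reverse closed-range implication, and the key point making it work is the finite-dimensionality of $K_1$, inherited from the finite-dimensional extension $\dom{S'}/\dom{S}$, which is what permits the strong subsequential limit on which the closedness of $S$ can then be exercised.
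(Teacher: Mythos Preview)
Your proposal is correct and follows essentially the same approach as the paper: both use the von Neumann parameterization to observe that $\dom{S'}$ and $\dom{S}$ differ by an $n$-dimensional subspace, so that $\ran{S'-\la}$ and $\ran{S-\la}$ likewise differ only finite-dimensionally, and derive the spectral equalities from this. The paper merely asserts that the remaining verifications are ``straightforward,'' whereas you supply the one genuinely non-obvious step---the reverse closed-range implication---via the reduced minimum modulus and a compactness argument in the finite-dimensional piece $K_1 = K' \cap K^\perp$; this argument is sound and a useful addition.
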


    The domain of any self-adjoint extension $S'$ of $S$ can be
written as (\cite{Glazman}, pg. 100): \be \dom{S'} = \dom{S}
\dotplus (U-1) \mc{D} _+. \ee Here, $U$ is the isometry from
$\mc{D} _+ $ onto $\mc{D} _-$ that defines the self-adjoint
extension $S'$. Since the domain of $S$ and $S'$ differ by a
finite dimensional subspace, so do the range of $S'$ and $S$.
Using these facts it is straightforward to establish the claim.

\section{Minimum uncertainty and spectra of self-adjoint extensions}
\label{section:uncer}

As was first pointed out in \cite{ak-first-sampling}, there exists
a close relationship between the finite lower bound $\Delta S_t$
on the uncertainty of a symmetric operator and the spectra of its
self-adjoint extensions. Our aim now is to refine those results,
and to include new results, in particular, concerning the
dependence of the density of eigenvalues on the operator's
deficiency indices.

\begin{defn}
We denote the expectation value and the uncertainty of a symmetric
operator $S$ with respect to a unit-length vector $\phi \in
\dom{S}$ by $\bar{S} _\phi := \langle \phi, S \phi\rangle$ and by
$\Delta S[\phi ] := \sqrt{ \ip{S\phi}{S\phi} - (\ip{S\phi}{\phi} )
^2}$ respectively. For a fixed expectation value $t\in \bm{R}$,
the quantity $\Delta S_t := \inf _{\phi \in \dom{S} ,
\ip{S\phi}{\phi} =t , \| \phi \| =1 } \Delta S [\phi ] $ will be
called the minimum uncertainty of $S$ at $t$. The overall lower
bound on the uncertainty of $S$ will be denoted by $\Delta S :=
\inf_{t \in \bm{R} } \Delta S_t $.
\end{defn}

    Recall that a symmetric operator $S$ is said to be simple if there is
no subspace $\mf{S} \subset \mc{H}$ such that $S | _\mf{S}$ is
self-adjoint. A point $z \in \bm{C}$ is said to be a point of
regular type for $S$ if $(S-z)$ has a bounded inverse defined on
$\ran{S-z}$. As discussed in the introduction, every point $z \in
\bm{C} \setminus \bm{R}$ is a point of regular type for a
symmetric operator $S$. $S$ is then said to be regular, if every
$z \in \bm{C}$ is a point of regular type for $S$. It is clear
that if $\phi $ is an eigenvector of $S$, then $\Delta S[\phi ]
=0$. Furthermore, if $\la \in \bm{R} $ belongs to $\sigma _c (S)$ then $\Delta S_\la = 0$. Hence, if $\Delta S
\geq \eps
>0$ this implies that $S$ has no continuous or point spectrum on the real
line. This means, in particular, that such an $S$ is not
self-adjoint, and is both simple and regular. In addition, the
theorem below shows that if $\Delta S
>0$ then $S$ must have equal deficiency indices.

\begin{thm}
    If $S$ is a symmetric operator with unequal deficiency indices, then $\Delta S = 0$.
Furthermore, $\Delta S _t =0$ for all $t \in \bm{R}$.
\label{thm:uneq}
\end{thm}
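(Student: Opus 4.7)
The plan is to produce, for every $t \in \bm{R}$ and every $\eps > 0$, a unit vector $\phi \in \dom{S}$ with $\ip{S\phi}{\phi} = t$ and $\Delta S[\phi] < \eps$. Under the constraints $\ip{S\phi}{\phi} = t$ and $\| \phi \| = 1$, a short computation yields the identity $\Delta S[\phi]^2 = \| (S-t) \phi \|^2$, so the task reduces to producing an approximate eigenvector of $S$ at the real value $t$ whose expectation value happens to land exactly on $t$.

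The first step is to argue that $n_+ \neq n_-$ forces every real $\la$ to be a point of \emph{non}-regular type for $S$. If some $\la _0 \in \bm{R}$ were of regular type, then $(S - \la _0)$ is bounded below by some $c > 0$, and the perturbation estimate $\| (S - z) \phi \| \ge (c - |z - \la _0|) \| \phi \|$ shows that the open disk $|z - \la _0| < c$ consists entirely of regular type points. Since $\dim \ran{S-z}^\perp$ is constant on each connected component of the regular type set (\cite{Glazman}), and this disk bridges the upper and lower half planes (on which the dimensions equal $n_-$ and $n_+$ respectively), one would deduce $n_+ = n_-$, contrary to hypothesis. Hence for every $\la \in \bm{R}$ there exists a sequence of unit vectors $\psi _n \in \dom{S}$ with $\| (S - \la) \psi _n \| \to 0$.

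To pin the expectation value at $t$, fix a small $\delta > 0$ (to be chosen from $\eps$ at the end) and select unit vectors $\psi _\pm \in \dom{S}$ with $\| (S - (t \pm \delta)) \psi _\pm \| < \delta /4$. The expectation values $t _\pm := \ip{S \psi _\pm}{\psi _\pm}$ then satisfy $t_- < t < t_+$, so in particular $\psi _+$ and $\psi _-$ are linearly independent. After multiplying $\psi _-$ by a phase so that $\ip{\psi _+}{\psi _-} \ge 0$, the interpolation $\phi (s) := s \psi _+ + (1-s) \psi _-$ satisfies $\| \phi (s) \|^2 \ge s^2 + (1-s)^2 \ge 1/2$ on $[0,1]$, so its normalization $\hat{\phi}(s) := \phi(s) / \| \phi(s) \|$ is continuous. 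Its expectation value depends continuously on $s$ and interpolates between $t_-$ and $t_+$, so the intermediate value theorem yields $s^{\ast} \in (0,1)$ with $\ip{S \hat{\phi}(s^{\ast})}{\hat{\phi}(s^{\ast})} = t$. Finally, the triangle inequality bounds $\| (S-t) \hat{\phi}(s^{\ast}) \|$ by a universal multiple of $\delta$, which can be made smaller than $\eps$ by shrinking $\delta$.

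The one delicate point is the last step: \emph{a priori} the two approximate eigenvectors could be nearly collinear, in which case a convex combination might lose almost all of its norm and destroy the bound on $\Delta S[\hat{\phi}(s^{\ast})]$. The phase adjustment together with the rigid separation $t_- < t < t_+$ (guaranteed by the $\delta/4$ approximation quality) is what prevents this collapse and keeps $\| \phi(s) \|$ bounded away from zero uniformly in $s$. Modulo that construction, the argument uses only standard perturbation and connectedness facts for the set of regular type points of a symmetric operator.
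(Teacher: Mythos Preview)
Your proof is correct and takes a genuinely different route from the paper's. The paper passes to a maximal symmetric extension with indices $(0,j)$ or $(j,0)$, takes its Cayley transform, invokes the Wold decomposition to exhibit shift summands, and then uses that the unit circle lies in the continuous spectrum of the shift to conclude $\sigma_c(S)=\bm{R}$; only after this does it run an interpolation argument (projecting an auxiliary vector off the approximate eigenvectors) to pin the expectation value. You bypass all of the Cayley/Wold machinery by appealing directly to the constancy of the defect number on connected components of the field of regularity: a single real regular-type point would produce a disk linking the two half-planes and force $n_+=n_-$. This yields approximate eigenvectors at every real value in one stroke.

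Your interpolation step is also tighter than the paper's. By choosing approximate eigenvectors at $t\pm\delta$ with error $<\delta/4$ you guarantee the rigid separation $t_-<t<t_+$, and the phase normalization $\ip{\psi_+}{\psi_-}\ge 0$ gives the uniform lower bound $\|\phi(s)\|^2\ge s^2+(1-s)^2\ge 1/2$ without any further hypothesis; the paper's version instead asserts (without full justification) that one can arrange $t_n\ge t$ and that the projected, renormalized auxiliary vectors satisfy $t_n'\le t$. What the paper's approach buys is a structural picture---the explicit appearance of shift operators and the identification of $\sigma_c(S)$ with $\bm{R}$---which connects to the subsequent discussion of non-orthogonal quasi-eigenstates and the von~Neumann model on $L^2[0,\infty)$. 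Your argument is shorter and uses only the elementary perturbation/connectedness facts already quoted from \cite{Glazman}, at the cost of not making that structure visible.
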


    In the proof of this theorem, it will be convenient to use
the Cayley transform of the symmetric operator $S$. Given $\la$
in the upper half plane ($UHP$), consider $\ka _\la (z) := \frac{( z- \la )}{(z- \ov{\la})}$.
If $S$ is self-adjoint, then $\ka (S)$ is unitary by the
functional calculus. More generally if $S$ is symmetric, $\ka _\la
(S)$ is a partially defined transformation which is an isometry
from $\ran{S-\ov{\la}}$ onto $\ran{S-\la}$ (\cite{Glazman},
sections 67 and 79). For convenience we will take $\la=i$, and
write $\ka _i (z) =: \ka (z)$. The linear map $V:=\ka (S) : \mc{D}
_+ ^{\perp} \rightarrow \mc{D} _- ^{\perp}$ is called the Cayley
transform of $S$. One can further show that if $\ka ^{-1} (z) :=
i\frac{1+z}{1-z}$ and $V = \ka(S)$, then  $\ka ^{-1} (\ka (z)) = z $
and $S = \ka ^{-1} (V)$. Recall that all symmetric extensions of the symmetric operator
$S$ can be constructed by taking the inverse Cayley transforms of partial isometric
extensions of the Cayley transform $V= \ka (S)$. For example, if $S$ has
deficiency indices $(n, m)$, one can define an arbitrary partial isometry $W$ from
$\mc{D} _+ $ into $\mc{D} _-$, and the inverse Cayley transform $\ka ^{-1} (V') =: S'$
of $V' := V \oplus W$ on $\mc{H} = \ker{S^* -i} \oplus \mc{D} _+$ will
be a symmetric extension of $S$.

    The proof of the above theorem will also make use of the Wold
decomposition for isometries. Recall that the Wold decomposition theorem
states that any isometry on a Hilbert space $\mc{H}$
is isometrically isomorphic to an operator $U \oplus \left( \oplus _{\alpha \in \La } R \right) $
on $\mc{H} _0 \oplus \left( \oplus _{\alpha \in \La} l^2 (\bm{N} ) \right)$ where $U$ is
some unitary on $\mc{H} _0$, $\La$ is some index set, and $R$
is the right shift operator on the Hilbert space of square summable
sequences, $l^2 (\bm{N})$ (see \emph{e.g} \cite{shift}, pg. 2).

\begin{proof}
    Suppose $S$ has deficiency indices $(n_+,n_-)$, $n_+ \neq n_-$, $n_\pm =  \dim{\mc{D} _\pm}$. Then $S$ has a
symmetric extension $S'$ with deficiency indices either
$(0,j)$ or $(j, 0)$, where $j:=| n_+ -n_- |$. Recall that such an extension
is obtained as follows. Take the Cayley transform $V$ of $S$, and in the case
where $n_-> n_+$, extend it by an arbitrary isometry from $\mc{D} _+$ into $\mc{D} _-$ to
obtain an isometry $V'$ with $\dim{\ran{V'} ^\perp} =j$. The inverse Cayley transform
of $V'$ yields the desired symmetric extension $S'$ with deficiency indices $(0,j)$. In the case
where $n_+ > n_-$, extend $V$ by an arbitrary isometry from an arbitrary $n_-$ dimensional
subspace of $\mc{D} _+$ onto $\mc{D} _-$ to obtain a partial isometry $V'$ with
$\dim{\ker{V'}} =j$ and $\dim{\ran{V'} ^\perp} = 0$. Again, the inverse Cayley transform
of $V'$ yields the desired symmetric extension $S'$ for this case with deficiency indices $(j,0)$.

Accordingly, the Cayley transform $V'$ of $S'$ is either an isometry with\\
$\dim{\ran{V'} ^\perp}=j$ or the adjoint of an isometry,
with $\dim{\ker{V'}} =j$. By the Wold decomposition theorem, $V'$ is isometrically
isomorphic to the direct sum of a unitary operator $U$
and either $j$ copies of the right shift operator or $j$ copies of the left shift
operator on $\mc{H} _0 \oplus _{i=1} ^{j} l^2 (\mathbb{N} )$. It follows
that $\sigma (V') \supset \sigma (R)$ or $\supset \sigma (L)$ respectively,
where $R$ and $L$ are the right and left shift operators on $l^2
(\bm{N})$. It is known that the unit circle lies in the continuous
spectrum of both the right and left shift operators. It is not
difficult to see that $\la \in \sigma _c (V') \setminus \{ 1 \}$ where $V'$
is the Cayley transform of $S'$ if and only if $\ka ^{-1} (\la)
\in \sigma _c (S') = \sigma _c (S)$.  It follows that the
continuous spectrum of $S$ (which is a subset of $\bm{R}$) is
non-empty and hence there exist $\phi \in \dom{S}$ for which
$\Delta S [\phi ]$ is arbitrarily small.

Furthermore, the above shows that the continuous
spectrum of $S$ is all of $\bm{R}$. Using this fact, it is not difficult to show that
$\Delta S _t =0$ for all $t\in \bm{R}$. First, given any fixed $t \in \bm{R}$, since
$t \in \sigma _c (S)$, one can find a sequence $(\phi _n )_{n \in \bm{N}} \subset \dom{S}$
such that
$\| (S-t) \phi_n \| \rightarrow 0$, $\| \phi _n \| =1 $, and $t_n := \ip{S\phi_n}{\phi _n}
\geq t$. Again, since $\sigma _c (S) = \bm{R}$, one can find a unit norm $\psi \in \dom{S}$
such that $\ov{S} _\psi = t' < t$. Let $P_n $ denote the projectors onto the orthogonal complements
of the one dimensional subspaces spanned by the $\phi _n$. Then each vector $P_n \psi \in
\dom{S}$, and it is easy to verify that if $\psi _n := \frac{P_n \psi}{\| P_n \psi \|}$,
then $\ov{S} _{\psi _n  } =: t' _n \leq  t $. For each $n \in \bm{N}$, let $\varphi _n$ be the linear
combination of the  $\psi _n $ and $\phi _n$ such that $\| \varphi _n \| =1 $ and $ \ov{S} _{\varphi _n} = t$.
It is straightforward to verify that $(S-t) \varphi _n \rightarrow 0$ so that $\Delta S _t =0$.
\end{proof}

    Note that the above theorem implies that if $S$ is
any symmetric operator with unequal deficiency indices that represents
a quantum mechanical observable, then even though $S$ is not self-adjoint,
it is possible to measure that observable as
precisely as one likes in the sense that $\Delta S _t = 0 $ for all $t \in \bm{R}$. Nevertheless,
despite the fact that $\Delta S _t =0$ for all $t  \in \bm{R}$, the situation is physically different from the
case of a self-adjoint observable. This is because, in this case, the formal, non-normalizable quasi-eigenstates
of the symmetric operator $S$ are non-orthogonal. For example, consider the case of the
symmetric derivative operator $D:= i\frac{d}{dx}$ defined with domain
$\dom{D}$ in $L^2 [0 ,\infty)$, $\dom{D} := \{ \phi \in L^2 [0 ,\infty) \ | \ \phi \in AC_{loc} [0,\infty) ; \ D \phi
\in L^2 [0 ,\infty); \ \phi (0) =0 \}$. Here, $AC _{loc} [0, \infty)$ denotes the set of all functions which are
absolutely continuous on any compact subinterval of $[0 , \infty)$. It is straightforward to check that $D$ has deficiency
indices $(0,1)$.  If $\phi _\la (x) := \frac{e^{-i\la x}}{\sqrt{2\pi}}$, for $x \in [0, \infty)$,
then $\phi _\la $ can be thought of as a formal, non-normalizable quasi-eigenstate for $S$, since
if $f \in L^2 [0, \infty)$, the formal inner product of $f$ with the $\phi _\la $,
\be \int _0 ^\infty f (x) \frac{1}{\sqrt{2\pi}} e^{i\la x} =: F(\la ), \ee generates a unitary transformation,
\emph{i.e.} the Fourier transform, of $L^2 [0, \infty)$ onto a subspace of $L^2 (\bm{R} )$, and under this transformation
$S$ is transformed into multiplication by the independent variable. These quasi-eigenstates are non-orthogonal in
the following sense. For $\eps >0 $ and $\la \in \bm{R}$, let
$\phi (\eps , \la ; x) := \frac{1}{\sqrt{2\pi}} \left( e^{-i\la x - \eps x } - e^{-\frac{x}{\sqrt{\eps}} } \right)$. Then
$\phi (\eps , \la ; x ) \in \dom{D}$ for any $\eps >0$, and as $\eps \rightarrow 0$, $\phi (\eps , \la ; x)$ converges
to $\phi _\la$ in $L^2 $ norm on any compact subinterval of $[0, \infty)$. Furthermore, it is straightforward to
check that $\frac{ \ip{D\phi (\eps ,\la ; \cdot) }{\phi (\eps , \la ; \cdot)}}{\| \phi (\eps , \la ; \cdot ) \| ^2 }
\rightarrow \la $ and that $\frac{ \ip{D\phi (\eps ,\la ; \cdot) }{D \phi (\eps , \la ; \cdot)}}{\| \phi (\eps , \la ; \cdot ) \| ^2 }
\rightarrow \la ^2 $ as $\eps \rightarrow 0$. However, if $\la _1 \neq \la _2 \in \bm{R}$, then the inner
product $\ip {\phi (\eps , \la _1 ; \cdot}{\phi (\eps , \la _2 ; \cdot )} $ converges to $\frac{1}{2\pi i (\la _2 -\la _1)} \neq 0$
in the limit as $\eps \rightarrow 0$. In this sense, the formal non-normalizable quasi-eigenstates  $\phi _\la $
are not orthogonal. Compare this to the case of the self-adjoint derivative operator $D' := i \frac{d}{dx}$ in $L^2 (\bm{R})$.
In this case the non-normalizable eigenstates to eigenvalues $\la \in \bm{R}$ are again $\phi _\la (x) = \frac{1}{\sqrt{2\pi}}
e^{-i\la x}$, $x \in \bm{R}$. If one chooses, for example, $\phi (\eps, \la ; x ) := \frac{1}{\sqrt{2\pi} } e^{-i\la x - \eps x^2}
\in \dom{D'}$, then it is straightforward to check that the inner product $\ip{\phi (\eps , \la _1 ; \cdot)}{\phi(\eps ,
\la _2 ; \cdot ) } $ vanishes as $\eps \rightarrow 0$ if $\la _1 \neq \la _2$, so that the non-normalizable eigenstates of
this self-adjoint operator are indeed orthogonal.

    For a concrete physical example, consider a telescope with some finite aperture. The accurate measurement of the
arriving photons' momentum orthogonal to the telescope is essential for the production of a sharp image. This amounts to
the measurement of the momentum of a particle in a box. The momentum operator (which is just $i$ times the first
derivative operator) acting on a particle in a box is a symmetric operator with deficiency indices (1,1). In this case
the finite aperture of the telescope causes a minimum uncertainty in the angle measurements \cite{ak-first-sampling}.
The case of a telescope is the
case of light being diffracted as it passes through a slit. The case of the symmetric derivative operator $D$ on the
half line, $L^2 [0,  \infty)$, is that of light being diffracted at a single edge, \emph{i.e.},
passing a single wall. The fact that the quasi eigenstates are not orthogonal means, physically,
 that there is a diffraction pattern in this case as well. \vspace{.1in}

     If $S$ is a simple symmetric operator with deficiency indices
$(j,0)$, or $(0,j)$, then it is straightforward to verify that $S$ is
isometrically isomorphic to $j$ copies of the differentiation operator
$i\frac{d}{dx}$ on $L^2 (0, \infty )$ or $L^2 (-\infty ,0)$ respectively. This fact was
first proven by von Neumann, see for example, (\cite{Glazman}, Section 82). Hence, if $S$ has
deficiency indices $(j,0)$, or $(0,j)$, it generates a semi-group
of isometries or co-isometries which is isometrically isomorphic to $j$ copies of
right translation on $L^2 (0, \infty)$ or $L^2 (-\infty , 0)$, respectively. It
follows that if $S$ is a symmetric quantum mechanical Hamiltonian operator, which has
deficiency indices $(m,n)$, and $j:=|n-m|$, then any maximal symmetric extension of $S$
will generate either an isometric or co-isometric time evolution of the quantum mechanical
system. Furthermore, the Hilbert space can be decomposed into $j+1$ subspaces such that the time-evolution
on the first subspace is unitary, and such that the time evolution on each of the other subspaces
is purely isometric or co-isometric. If the state of the system begins in one of these subspaces,
its image at any later time will be confined to that subspace, so that there are, in general,
subspaces of the Hilbert space which will be inaccessible to the time evolution of the system
once the initial state is fixed.

\begin{thm}
Let $S$ be a densely defined, closed symmetric operator with
finite and equal deficiency indices $(n,n)$. If $\Delta S > 0$,
then any self-adjoint extension $S'$ of $S$ has a purely discrete
spectrum, $\sigma (S') = \sigma _p (S')$. In particular, if
$\Delta S _t > \eps >0$ for all $t \in I \subset \bm{R}$, then
$S'$ can have no more than $n$ eigenvalues (including
multiplicities) in any interval $J \subset I$ of length less than
or equal to $\eps$, and if $n=1$, then $S'$ can have no more than one
eigenvalue in any interval $J \subset I$ of length less than or equal
to $2\eps$.  \label{thm:minun}
\end{thm}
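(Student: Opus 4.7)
My plan is to prove the two assertions separately. The pure-point statement is a quick corollary of Claim~\ref{claim:espec}: since $S'$ is self-adjoint, $\sigma_r(S') = \emptyset$, and Claim~\ref{claim:espec} gives $\sigma_c(S') = \sigma_c(S)$. The paragraph immediately preceding the theorem already records that $\Delta S > 0$ forces $\sigma_c(S) = \emptyset$ and $\sigma_p(S) = \emptyset$, whence $\sigma(S') = \sigma_p(S')$. One further observes that $\sigma_e(S') = \sigma_e(S) = \emptyset$, since any essential-spectrum point of $S$ would be either an infinite-multiplicity eigenvalue or a continuous-spectrum point, both ruled out; so the eigenvalues of $S'$ are in fact isolated and of finite multiplicity, justifying the word ``discrete''.

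For the counting assertion, the key input is the von Neumann parameterisation $\dom{S'} = \dom{S} \dotplus (U-1)\mc{D}_+$, which yields $\dim(\dom{S'}/\dom{S}) = n$. Suppose, toward a contradiction, that $S'$ has $n+1$ eigenvalues $\lambda_1,\ldots,\lambda_{n+1}$ (with multiplicity) inside an interval $J \subset I$ of length $\le \epsilon$, with orthonormal eigenvectors $\psi_1,\ldots,\psi_{n+1}$. Their $(n+1)$-dimensional span $V \subset \dom{S'}$ meets $\dom{S}$ in a subspace of dimension at least $(n+1)-n = 1$, so one can choose a unit vector $\phi = \sum_k c_k \psi_k \in \dom{S}$. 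Because $S' \supset S$ and the $\psi_k$ are orthonormal eigenvectors,
\[
\bar S_\phi = \sum_k |c_k|^2 \lambda_k, \qquad \Delta S[\phi]^2 = \sum_k |c_k|^2 \lambda_k^2 - \bigl(\bar S_\phi\bigr)^2,
\]
i.e.\ $\Delta S[\phi]^2$ is exactly the variance of the probability measure $\sum_k |c_k|^2 \delta_{\lambda_k}$ supported on $J$. Any such measure on an interval of length $\le \epsilon$ has variance at most $(\epsilon/2)^2$. Since $t := \bar S_\phi$ is a convex combination of the $\lambda_k$'s it lies in $J \subset I$, so by hypothesis $\Delta S[\phi] \ge \Delta S_t > \epsilon > \epsilon/2$, a contradiction.

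The sharper $n=1$ claim follows from the same scheme with only two eigenvalues $\lambda_1,\lambda_2$: the two-point variance factorises as $|c_1|^2 |c_2|^2 (\lambda_2-\lambda_1)^2$, and AM-GM gives $|c_1|^2 |c_2|^2 \le 1/4$ regardless of the specific ratio of $c_1$ to $c_2$ forced by $\phi \in \dom{S}$. On an interval of length $\le 2\epsilon$ this yields $\Delta S[\phi] \le \epsilon$, again contradicting $\Delta S_t > \epsilon$. The only step that is not purely formal is the dimension count ensuring $V \cap \dom{S}$ is non-trivial, but this is immediate from the von~Neumann formula; the rest is the elementary bound that a probability measure on an interval of length $L$ has standard deviation at most $L/2$, combined with the crucial fact that the constructed $\phi$ lies in $\dom{S}$ (not merely $\dom{S'}$) so that the defining infimum of $\Delta S_t$ can be invoked.
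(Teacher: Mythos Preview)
Your proof is correct and follows the paper's strategy exactly: use Claim~\ref{claim:espec} and the vanishing of $\sigma_c(S)$, $\sigma_p(S)$ for the discreteness, then exploit $\dim\bigl(\dom{S'}/\dom{S}\bigr)=n$ to place a unit vector $\phi$ from the span of $n+1$ eigenvectors into $\dom{S}$ and bound its uncertainty. The only difference is in how the variance bound is executed. The paper estimates $(\Delta S[\psi])^2 \le 2|t|\eps + \eps^2$ and then translates by $t$ to reduce to the case $\bar S_\psi = 0$; for $n=1$ it solves for $|c_1|,|c_2|$ explicitly and maximises $2\eps x - x^2$. You instead invoke the elementary (Popoviciu) bound $\mathrm{Var}\le (L/2)^2$ for a probability measure supported on an interval of length $L$, and for $n=1$ the closed form $p(1-p)(\la_2-\la_1)^2$ together with $p(1-p)\le 1/4$. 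This is cleaner, and it is worth noticing that your general-$n$ argument already gives $\Delta S[\phi]\le L/2$ for \emph{any} number of eigenvalues in an interval of length $L$; taking $L=2\eps$ therefore proves the sharper ``$2\eps$'' conclusion for every $n$, not only $n=1$ --- a strengthening the paper explicitly flags as open in the remark following the theorem.
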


    This theorem shows, in particular, that if $\Delta S > \eps$,
then any self-adjoint extension of $S$ has no more than $n$
eigenvalues in any interval of length $\eps$. The authors are currently investigating whether
the improved result that holds for the $n=1$ case
generalizes to the case of higher deficiency indices.

\begin{proof}

    If $\Delta S >0$, then as in the discussion preceding the
proof of Theorem \ref{thm:uneq}, we conclude that every $z \in
\bm{C}$ is a point of regular type for $S$. Since $S$ has finite
and equal deficiency indices, if $S'$ is any self-adjoint
extension of $S$, it follows that $\sigma _e (S') = \sigma _e (S)$
consists only of the point at infinity. This implies that the
spectrum of $S'$ consists soley of eigenvalues of finite
multiplicity with no finite accumulation point.

    Suppose that there is a self-adjoint extension $S'$ of $S$ which
has $n+1$ eigenvectors $\phi _i$ to eigenvalues $\lambda _i$ where
$\la _i \in J \subset I$, and the length of $J$ is less than or
equal to $\eps$. Then since the dimension of $\dom{S'} $ modulo
$\dom{S}$ is $n$, there is a non-zero linear combination of these
orthogonal eigenvectors, $\psi = \sum _{i=1} ^{n+1} c_i \phi _i$
which has unit norm and which belongs to $\dom{S}$. The
expectation value of the symmetric operator $S$ in the state $\psi
$ lies in $J$, $t:= \ov{S} _\psi \in J$ since $\psi$ is a linear
combination of eigenvectors to $S '$ whose eigenvalues all lie in
$J$. Now it is straightforward to verify that since $|\lambda _i |
< |t| + \eps  $ for all $1 \leq i \leq n+1$, that \be \left(
\Delta S [\psi ] \right) ^2  = \sum _{i=1} ^{n+1} \lambda _i ^2
|c_i |^2 - t ^2 \label{eq:bound} \leq \sum _{i=1} ^{n+1} (|t| +
\eps ) ^2 |c_i |^2  - t ^2 = 2 |t | \eps  + \eps  ^2 \ee

    Now first suppose that $0 \in J$ and that
$t := \ov{S} _\psi =0$. Then in this case equation
(\ref{eq:bound}) contradicts the fact that $\Delta S _0 > \eps $,
proving the claim for this case.

    If $t \neq 0$, then consider the symmetric operator $S (t) := S
-t$ on $\dom{S}$. Given any $\phi \in \dom{S}$ which has unit norm
and expectation value $\ov{S} _\phi = \ip{S \phi}{\phi} =t$, it is
not hard to see that $\ov{S(t)} _\phi = \ip{S(t) \phi}{\phi} =0$
and that \ba \Delta S (t) [\phi ] &  = &  \ip{S(t) \phi}{ S(t)
\phi} = \ip{S \phi}{ S \phi} - 2t \ip{S\phi}{\phi} + t^2 \nonumber
\\ & = & \ip{S \phi}{ S \phi} - t^2 = \Delta S [\phi ] .\ea This
shows that $\Delta S(t) _0 = \Delta S _t > \eps $. Now let $S'$ be
any self-adjoint extension of $S$. Applying the above result for
the expectation value $0$ to the symmetric operator $S(t)$, we
conclude that the self-adjoint extension $S' (t) := S' -t $ of
$S(t)$ can have no more than $n$ eigenvalues in the interval $J -
t$. This in turn implies that $S'$ can have no more than $n$
eigenvalues in the interval $J$.

    Now suppose that $n=1$. As in the above, if $S'$ is a self-adjoint extension of
$S$ that has two eigenvectors $\phi _1, \phi _2$ to eigenvalues $\la_1 ,\la _2$,
in a subinterval $J\subset I$ of length less than or equal to $2\eps$, then there is
a unit norm vector $\psi = c_1 \phi _1 + c_2 \phi _2$, that belongs to $\dom{S}$.
The expectation value of $\psi$, $t:= \ip{S\psi}{\psi}$ will also belong to $J$.
The expectation value $t$ and the fact that $\psi $ has unit norm, uniquely determines
the constants $c_1$ and $c_2$ up to complex numbers of modulus one:
\be |c_1| = \sqrt{\frac{|\la _2 -t|}{| \la _1 -\la _2 | } } \ \ \mathrm{and}~~
|c_2 | = \sqrt{ \frac{ | \la_1  -t | }{ |\la _1 - \la _2 | }}. \ee It is now straightforward
to calculate that $\Delta S [\psi ] = \sqrt{|\la _1 - t| |\la _2 -t|}$. Assume, without
loss of generality, that $\la _1 < \la _2$, so that $\la _1 < t < \la _2$. Since $\la _1 , \la _2$
belong to the same interval $J$ with length less than or equal to $2 \eps $, it follows that
$| \la _2 -t | \leq 2\eps - |\la _1 -t |$, so that $\Delta S [\psi ] \leq \sqrt{ (2\eps |\la _1 -t| -|\la _1 -t| ^2) }$.
It is simple to check that the function $f(x) = 2\eps x - x^2 $ has a global maximum
of $\eps ^2$ when $x = \eps $, so that $\Delta S [\psi ] \leq \eps$. Since $t \in I$, this
contradicts the assumption that $\Delta S _t > \eps$.

\end{proof}

\begin{cor}
    If $S$ is a symmetric operator with finite deficiency indices
such that $\Delta S = \eps >0$, then $S$ is simple, regular, the
deficiency indices $(n,n)$ of $S$ are equal, and the spectrum of any
self-adjoint extension of $S$ is purely discrete and consists of
eigenvalues of finite multiplicity at most $n$ with no finite accumulation point.
\label{cor:sreg}
\end{cor}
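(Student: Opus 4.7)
The plan is essentially to assemble facts already established in the preceding discussion and in Theorems \ref{thm:uneq} and \ref{thm:minun}; I do not expect any substantially new argument to be required.

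First, I would dispose of simplicity and regularity by invoking the observation recorded in the paragraph preceding Theorem \ref{thm:uneq}. If $\phi \in \dom{S}$ were a unit-norm eigenvector to a real eigenvalue of $S$, then $\Delta S[\phi ] = 0$, contradicting $\Delta S \geq \eps$; and if $\la \in \sigma _c (S) \cap \bm{R}$, the approximate-eigenvector characterization of the continuous spectrum yields $\Delta S _\la = 0$, again a contradiction. Hence $S$ has neither point nor continuous spectrum on $\bm{R}$, which by that same remark forces $S$ to be simple and regular, and in particular non-self-adjoint.

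Next, equal deficiency indices follow immediately from Theorem \ref{thm:uneq} by contraposition, since its conclusion that $\Delta S = 0$ is incompatible with $\Delta S = \eps > 0$. Write the common value of the deficiency indices as $(n,n)$.

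The remaining spectral conclusions then come directly from Theorem \ref{thm:minun}, applied with $I = \bm{R}$ and with any $\eps ' < \eps$: every self-adjoint extension $S'$ of $S$ has purely discrete spectrum, with at most $n$ eigenvalues (counted with multiplicity) in any interval of length at most $\eps '$. From this one quantitative bound the rest is formal. An eigenvalue of multiplicity exceeding $n$ would by itself place more than $n$ eigenvalues inside the degenerate interval of length zero at that point, contradicting the bound. A finite accumulation point $\la _0$ of eigenvalues of $S'$ would, beyond some index, crowd infinitely many eigenvalues into an $\eps '$-neighborhood of $\la _0$, again violating the bound.

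The main (mild) obstacle I anticipate is in justifying cleanly that $\la \in \sigma _c (S) \cap \bm{R}$ forces $\Delta S _\la = 0$ under the expectation-value-constrained definition of $\Delta S _t$, rather than under the unconstrained uncertainty. The correction-by-projection device used at the end of the proof of Theorem \ref{thm:uneq}, in which approximate eigenvectors are modified by a rank-one projector to tune their expectations to a prescribed value, is precisely what is needed. Apart from this minor point, the proof is pure bookkeeping among the preceding results.
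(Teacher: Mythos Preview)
Your proposal is correct and follows essentially the same route as the paper: simplicity and regularity come from the remarks preceding Theorem~\ref{thm:uneq}, equality of the deficiency indices from Theorem~\ref{thm:uneq}, and discreteness of the spectrum of any self-adjoint extension from Theorem~\ref{thm:minun}. The one substantive difference is the source of the multiplicity bound. The paper invokes the classical fact (\cite{Glazman}, Section~83) that for a simple symmetric operator with deficiency indices $(n,n)$ no eigenvalue of any self-adjoint extension has multiplicity exceeding $n$; you instead extract this bound directly from the eigenvalue-counting clause of Theorem~\ref{thm:minun} by taking $J$ to be a degenerate interval. Your argument is more self-contained; the paper's is shorter and does not require the strict inequality $\Delta S_t > \eps'$ on all of $\bm{R}$.

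One remark: the ``mild obstacle'' you flag is not actually present. To conclude that $S$ is regular you only need that $\la \in \sigma_c(S)$ forces $\Delta S = 0$, not the stronger constrained statement $\Delta S_\la = 0$. If $\|(S-\la)\phi_n\| \to 0$ for unit vectors $\phi_n$, then $\bar{S}_{\phi_n} \to \la$ and hence $\Delta S[\phi_n] = \|(S - \bar{S}_{\phi_n})\phi_n\| \to 0$, so the unconstrained infimum $\Delta S$ already vanishes. The projection trick from the end of the proof of Theorem~\ref{thm:uneq} is not needed here.
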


    It is known (\cite{Glazman}, section 83) that if $S$ is a
closed, densely defined simple symmetric operator with equal and
finite deficiency indices $(n,n)$, then the multiplicity of any
eigenvalue of any self-adjoint extension $S'$ of $S$ does not
exceed $n$. Corollary \ref{cor:sreg} is an immediate consequence
of this fact and Theorems \ref{thm:uneq} and \ref{thm:minun}.

    For example, consider the symmetric differential
operator $S':=-\frac{d}{dx} \left( x \frac{d}{dx} \cdot \right) +
x $ defined on the dense domain $C_0 ^\infty (0,\infty ) \subset
L^2 [0 , \infty)$ of infinitely differentiable functions with
compact support in $(0, \infty)$. Let $S$ be the closure of $S'$.
Let $D$ be the closed symmetric derivative operator on $L^2 [0,
\infty)$ which is the closure of the symmetric derivative operator
$D' := i \frac{d}{dx}$ on the domain $\mf{D} := C_0 ^\infty (0 ,
\infty )$. It follows that $\mf{D}$ is a core for both $D$ and for
$S$. Recall that a dense set of vectors $\mc{D}$ is called a core
for a closable operator $T$, if $\ov{ T | _{\mc{D}}} = \ov{T}$.
For all $\phi \in \mf{D}$, it is easy to verify that $[D , S]
\phi := (DS - SD) \phi = i (D^2 +1) \phi$. By the uncertainty
principle, it follows that for any $\phi \in \mf{D}$, \ba \Delta S
[\phi ] \Delta D [\phi ] & \geq & \frac{1}{2} | \ip{\phi}{[D ,S] \phi}
| = \frac{1}{2} \ip{\phi}{(D^2 +1)\phi} \nonumber \\ & = & \frac{1}{2} \left( \Delta
D [\phi ]  ^2 + \ip{\phi}{D\phi} ^2 + \ip{\phi}{\phi} \right).\ea
Using the fact that the function $f(t) = \frac{t^2 +1}{2t} $ is
concave up for all $t \in (0 ,\infty)$ and has a global minimum
$f(1) =1$, we conclude that $\Delta S [\phi ] \geq 1 $ for any
$\phi \in \mf{D}$. Since $\mf{D}$ is a core for $S$, given any
$\psi \in \dom{S}$ we can find a sequence $\psi _n \in \mf{D}$
such that $\psi _n \rightarrow \psi$ and $S \psi _n  \rightarrow S
\psi$. It follows that $\Delta S [\psi ] = \lim _{n \rightarrow
\infty} \Delta S [ \psi _n ] \geq 1$. This shows that $\Delta S
\geq 1$. Now $S$ is a second order symmetric differential
operator. It is known that the deficiency indices of such an
operator are equal and do not exceed $(2,2)$ (\cite{Naimark},
Section 17). Since $\Delta S \geq 1$, Corollary \ref{cor:sreg}
also implies that the deficiency indices of $S$ must be equal and
non-zero. Hence $D$ has deficiency indices $(1,1)$ or $(2,2)$.
Applying Theorem \ref{thm:minun} one can now conclude that any
self-adjoint extension of $S$ can have at most two eigenvalues in
any interval of length $1$.

   \noindent Conversely, if $S$ has finite deficiency indices and is simple
and regular, then $\Delta S >0$.

\begin{thm}
 Suppose that $S$ is a regular, simple symmetric operator with
 finite and equal deficiency indices. Let $\mf{S}$ denote the set
 of all self-adjoint extensions of $S$ within $\mc{H}$. Then, \be \Delta S _t \geq \max
 _{S' \in \mf{S}} \Delta S' _t = \max _{S' \in \mf{S}} \left( \min
 _{\la , \mu \in \sigma (S')} \sqrt{|\la - t||\mu -t | } \right).
 \label{eq:minunt} \ee \label{thm:sreg}
\end{thm}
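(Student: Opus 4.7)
My plan splits the proof into two pieces mirroring the two parts of (\ref{eq:minunt}): first the inequality $\Delta S_t \geq \max_{S' \in \mf{S}} \Delta S'_t$, then the evaluation of $\Delta S'_t$ itself in terms of $\sigma(S')$.

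For the first inequality, the key observation is that every self-adjoint extension $S' \in \mf{S}$ satisfies $\dom{S} \subset \dom{S'}$ and $S'|_{\dom{S}} = S$. Consequently $\Delta S[\phi] = \Delta S'[\phi]$ and $\ip{S\phi}{\phi} = \ip{S'\phi}{\phi}$ for every $\phi \in \dom{S}$, so the set of admissible states for $\Delta S_t$ is a subset of that for $\Delta S'_t$. Since the infimum over a smaller set is larger, $\Delta S_t \geq \Delta S'_t$ holds for each $S' \in \mf{S}$; supremising over $\mf{S}$ yields the claimed inequality. (Attainment of the max, replacing sup, follows from compactness of the unitary parametrization $U:\mc{D}_+ \to \mc{D}_-$ of $\mf{S}$ together with suitable continuity of the minimum variance in the extension parameter.)

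For the evaluation of $\Delta S'_t$, Theorem \ref{thm:minun} together with Corollary \ref{cor:sreg} guarantees that each $S'$ has purely discrete spectrum $\{\la_i\}$ with no finite accumulation point, and hence an orthonormal eigenbasis $\{e_i\}$. Expanding a candidate unit vector $\phi = \sum c_i e_i \in \dom{S'}$ with $\ip{S'\phi}{\phi} = t$, and setting $p_i := |c_i|^2$, reduces the computation to the linear program
$$\left(\Delta S'_t\right)^2 \;=\; \inf\!\left\{\, \sum_i p_i(\la_i-t)^2 \ : \ p_i\geq 0,\ \sum_i p_i=1,\ \sum_i \la_i p_i = t \,\right\}.$$
The objective is linear in $(p_i)$ and the feasible set has only two equality constraints, so its extreme points are two-point distributions supported on pairs $\la,\mu \in \sigma(S')$ with $\la \leq t \leq \mu$. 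Solving the equalities yields $p_\la = (\mu-t)/(\mu-\la)$ and $p_\mu = (t-\la)/(\mu-\la)$, and a direct computation gives variance $(t-\la)(\mu-t) = |\la-t|\,|\mu-t|$. Minimizing over admissible pairs then produces the stated formula; the minimum is actually attained (by the consecutive eigenvalues bracketing $t$) because $\sigma(S')$ has no finite accumulation point, and the minimizing state $c_\la e_\la + c_\mu e_\mu \in \dom{S'}$ trivially as a finite linear combination of eigenvectors.

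The main technical point requiring care is the LP reduction in the infinite-dimensional setting: strictly speaking one must show that no distribution with infinite support can beat the best two-point extreme point. The cleanest argument is by convexity: for any feasible $(p_i)$, set $\la^- := \max\{\la_i : p_i > 0,\, \la_i \leq t\}$ and $\la^+ := \min\{\la_i : p_i > 0,\, \la_i \geq t\}$, and apply Jensen's inequality separately to the two conditional distributions below and above $t$ (using $\alpha(t-a) = \beta(b-t)$, where $\alpha, \beta$ are the conditional weights and $a, b$ the conditional means) to obtain $\sum p_i(\la_i-t)^2 \geq (t-a)(b-t) \geq (t-\la^-)(\la^+-t)$. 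A secondary subtlety worth flagging is the implicit constraint $\la \leq t \leq \mu$ in the minimum over $\sigma(S') \times \sigma(S')$ — only such pairs support probability distributions with mean $t$ — together with the convention $\Delta S'_t = +\infty$ when $\sigma(S')$ lies entirely on one side of $t$, so that such extensions contribute nothing to the supremum.
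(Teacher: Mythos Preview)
Your argument is correct and, for the variance computation, takes a genuinely different route from the paper. The inequality $\Delta S_t \geq \max_{S'} \Delta S'_t$ is handled identically in both proofs. For the evaluation of $\Delta S'_t$, however, the paper expands $\phi$ in the eigenbasis and applies a formal Lagrange-multiplier argument: setting the functional derivative of $\sum_n |\phi_n|^2 (t_n^2 - t^2) - \alpha_1 \sum_n |\phi_n|^2 t_n - \alpha_2 \sum_n |\phi_n|^2$ to zero forces any stationary $\phi$ to be supported on at most two distinct eigenvalues, after which the two constraints pin down $|c_1|,|c_2|$ and give $\Delta S'[\phi] = \sqrt{|\la - t||\mu - t|}$. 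Your linear-programming reformulation --- treating $p_i = |c_i|^2$ as a probability distribution, invoking the two-affine-constraint extreme-point structure, and then closing the infinite-support case with the conditional-Jensen estimate $\sum_i p_i(\la_i - t)^2 \geq (t-a)(b-t) \geq (t-\la^-)(\la^+ - t)$ --- reaches the same conclusion but is more self-contained precisely where the paper's variational step is somewhat formal. One minor slip: the discreteness of $\sigma(S')$ in the present theorem follows directly from the assumed regularity of $S$ together with Claim~\ref{claim:espec}, not from Theorem~\ref{thm:minun} or Corollary~\ref{cor:sreg}, both of which carry the hypothesis $\Delta S > 0$ rather than regularity.
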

\begin{proof}
    Note that if $S$ is simple and regular with finite deficiency
indices $(m,n)$, then these indices must be equal, otherwise $S$
would have continuous spectra and would not be regular.

    Since $\dom{S} \subset \dom{S'}$ and $S' | _{\dom{S}} = S$
for any $S' \in \mf{S}$, it is clear that $\Delta S _t \geq \max
_{S' \in \mf{S} } \Delta S'_t$. It remains to prove that $\Delta
S'_t  = \min _{\la , \mu \in \sigma (S') } \sqrt{ | \la - t | |
\mu - t | } $ for any $S' \in \mf{S}$. Since we assume $S$ is
regular, simple, and has finite deficiency indices, the essential
spectrum of $S$ is empty. Hence by Claim \ref{claim:espec},
$\sigma _e (S' ) $ is empty for any $S' \in \mf{S}$. This shows
that the spectrum of any $S'$ consists solely of eigenvalues of
finite multiplicity with no finite accumulation point. Order the
eigenvalues as a non-decreasing sequence $(t _n ) _{n \in \bm{M}}
 $ and let $\{ b_n \} _{n \in \bm{M}}$ be the corresponding
orthonormal eigenbasis such that $S' b_n = t _n b_n$. Here $
\bm{M} = \pm \bm{N}$ or $=\bm{Z}$, depending on whether $S'$ is
bounded above, below, or neither bounded above nor below.  To calculate
$\Delta S'_t$, we need to minimize the functional \be \Phi ' [
\phi ] := \ip{ S' \phi }{S' \phi} - t^2 \ee over the set of all
unit norm $\phi \in \dom{S'}$ which satisfy $\ip{S' \phi}{\phi}
=t$. Let us assume that $t$ is not an eigenvalue of $S'$ as in
this case $\Delta S'_t =0$ and (\ref{eq:minunt}) holds trivially.
Expanding $\phi$ in the basis $b_n$, $\phi = \sum _{n \in \bm{M}}
\phi _n b_n$, we see that to find the extreme points of $\Phi ' $
subject to these constraints we need to set the functional
derivative of \be \Phi [ \phi ] := \sum _{n \in \bm{Z}} \phi _n
\ov{\phi} _n \left( \phi _n \ov{\phi} _n (t_n ^2 -t ^2) - \alpha
_1 \phi _n \ov{\phi} _n t_n - \alpha _2 \right) \ee to zero. Here
$\alpha _1 ,\alpha _2 $ are Lagrange multipliers. Setting the
functional derivative of $\Phi$ with respect to $\ov{\phi}$ to $0$
yields: \be 0 = \phi _n \left( (t_n ^2 -t^2 ) - \alpha _1  t_n
-\alpha _2 \right) \label{eq:constraint} \ee Formula
(\ref{eq:constraint}) leads to the conclusion that if $\phi $ is
an extreme point, it must be a linear combination of two
eigenvectors to $S'$ corresponding to two distinct eigenvalues.

    To see this note that if the decomposition of $\phi$ in
the eigenbasis $\{ b_n \}$ had three non-zero coefficients, say
$\phi _{j_i}$, $i=1,2,3$, all of which correspond to eigenvectors
$b_{j_i}$ with distinct eigenvalues, $t_i \neq t_j$, $1 \leq i,j
\leq 3$, then Equation (\ref{eq:constraint}) leads to the conclusion
that $\alpha _1 = t_{j_1} + t_{j_2} = t_{j_2} + t_{j_3} $ which would imply that
$t_{j_1} = t_{j_3}$, a contradiction. Furthermore, $\phi$ cannot
just be a linear combination of eigenvectors $b_j$ to one
eigenvalue, as such a linear combination cannot satisfy the
constraint $\ip{S \phi }{\phi } =t$. So let $\la := t_i $ and $\mu
:= t_j $ for any $j, i \in \bm{Z}$ for which $t_i \neq t_j$.
Choose $\varphi \in \ker{S^* -\la} $ and $\psi \in \ker{S^* -\mu}$. We
have shown that $\phi$ has the form $\phi = c_1 \varphi + c_2
\psi$. As in the proof of Theorem \ref{thm:minun},
the constraints that $\ip{\phi }{\phi } = 1 $ and
$\ip{S \phi }{\phi }=t$ uniquely determine $c_1$ and $c_2$ up to
complex numbers of modulus one:

\be |c_1| = \sqrt{\frac{|\mu -t|}{| \la -\mu| } } \ \ \mathrm{and}~~
|c_2 | = \sqrt{ \frac{ | \la -t | }{ |\la - \mu | }}. \ee

    The phases of $c_1 $ and $c_2 $ do not affect the value of
$\Phi [ \phi ]$. It follows that if $\phi $ extremizes $\Phi$,
then $\Delta S' [\phi ] = \sqrt{ | \mu - t | | \la - t |} $ so
that $\Delta S'_t = \min _{\mu, \la \in \sigma (S' ) } \sqrt{ |
\mu - t | | \la - t |}$.
\end{proof}

    Observe that the curve $f(t) = \sqrt{ | \mu - t | | \la - t
|}$ describes the upper half of a circle of radius $\frac{|\la -
\mu| }{2} $ centred at the point $\frac{\la +\mu}{2}$.

\begin{cor}
    If $S$ is a symmetric operator with deficiency indices $(n,n)$
such that $\Delta S>0$, then $\max _{S' \in \mf{S}} \Delta S' _t \geq \frac{\Delta S}{\sqrt{2}}$.
If $n=1$, then $\max _{S' \in \mf{S}} \Delta S' _t \geq \Delta S$, so that
\be \Delta S = \inf _{t \in \bm{R}} \max
 _{S' \in \mf{S}} \Delta S' _t = \inf _{t \in \bm{R}} \max _{S' \in \mf{S}} \left( \min
 _{\la , \mu \in \sigma (S')} \sqrt{|\la - t||\mu -t | } \right). \label{eq:better} \ee

\end{cor}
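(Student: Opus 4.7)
Since $\Delta S > 0$, Corollary \ref{cor:sreg} guarantees that $S$ is simple and regular with equal finite deficiency indices $(n,n)$, so Theorem \ref{thm:sreg} applies. This immediately yields the second equality in (\ref{eq:better}) and the inequality $\Delta S _t \geq \max _{S' \in \mf{S}} \Delta S' _t$ for every $t \in \bm{R}$, whence $\Delta S = \inf _t \Delta S _t \geq \inf _t \max _{S'} \Delta S' _t$. The remaining task is, for each fixed $t \in \bm{R}$, to exhibit an extension $S' \in \mf{S}$ with $\Delta S' _t \geq \Delta S / \sqrt{2}$ in general, and $\Delta S' _t \geq \Delta S$ when $n = 1$.

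For the $n = 1$ case, the self-adjoint extensions of $S$ are parameterized by $\theta \in U(1)$ via $S' _\theta := \ka ^{-1} (e^{i\theta} V)$, where $V$ is the Cayley transform of $S$. A standard fact about deficiency-$(1,1)$ symmetric operators is that each eigenvalue of $S' _\theta$ depends continuously and strictly monotonically on $\theta$, and as $\theta$ traces the full circle every eigenvalue slides continuously into the position of its successor; in particular the spectra of distinct extensions strictly interlace on $\bm{R}$. A continuity / intermediate-value argument therefore produces a $\theta _0$ for which $t$ is the exact midpoint of two consecutive eigenvalues $\la ^- < t < \la ^+$ of $S' _{\theta _0}$. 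Since $\Delta S > 0$, the $n = 1$ refinement in Theorem \ref{thm:minun} forces $\la ^+ - \la ^- \geq 2 \Delta S$, and Theorem \ref{thm:sreg} computes $\Delta S' _{\theta _0} | _t = \sqrt{(t - \la ^-)(\la ^+ - t)} = (\la ^+ - \la ^-)/2 \geq \Delta S$, establishing the full equation (\ref{eq:better}).

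For general $n$, the plan is to run the same strategy along a one-parameter subfamily: fix a unitary $U_0 : \mc{D} _+ \to \mc{D} _-$ and set $S' _\theta := \ka ^{-1} (e^{i\theta} U_0)$, whose eigenvalues again depend continuously on $\theta$. Choose $\theta$ so that $t$ lies midway between two consecutive eigenvalues $\la ^-_\theta < t < \la ^+_\theta$ of $S' _\theta$, giving $\Delta S' _\theta | _t = (\la ^+_\theta - \la ^-_\theta)/2$ by Theorem \ref{thm:sreg}. The main obstacle, and the source of the extra factor $1/\sqrt{2}$, is the spectral-gap estimate: for $n \geq 2$, Theorem \ref{thm:minun} only bounds by $n$ the number of eigenvalues in an interval of length $\Delta S$ and does not directly bound the gap containing $t$. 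The required improvement $\la ^+_\theta - \la ^-_\theta \geq \sqrt{2}\, \Delta S$ must be extracted by a finer argument combining the $(n+1)$-consecutive-eigenvalue span estimate (any $n+1$ consecutive eigenvalues of any extension span at least $\Delta S$) with the $U(n)$-parameterization of $\mf{S}$, either by averaging the gap length over the chosen subfamily and applying a pigeonhole / mean-value argument, or by an explicit variational choice of $U_0$ and $\theta$; this is the delicate step, and it is where the $\sqrt{2}$ enters the bound.
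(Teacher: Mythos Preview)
Your $n=1$ argument is correct but takes a different route from the paper. You invoke the continuous, strictly monotone dependence of the eigenvalues of $S'_\theta$ on $\theta$ and use an intermediate-value argument to center $t$ in a spectral gap; the paper instead \emph{prescribes} the location of an eigenvalue and reads off the bound. Both work; yours needs the (true but nontrivial) eigenvalue-flow fact, while the paper's is more direct.

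For general $n$ there is a genuine gap. You correctly observe that along a one-parameter family $e^{i\theta}U_0$ you get no control on the length of the gap containing $t$, and you then defer the actual work to an unspecified ``finer argument'' involving averaging or pigeonholing. That step is never carried out, and as stated it is not clear it can be: nothing prevents the two eigenvalues bracketing $t$ from being close for every $\theta$.

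The paper bypasses this difficulty with an idea you are missing. By a classical fact (\cite{Glazman}, pg.~109), for every regular point $\lambda$ there exists a self-adjoint extension $S'$ for which $\lambda$ is an eigenvalue of \emph{full multiplicity} $n$. Since Theorem~\ref{thm:minun} allows at most $n$ eigenvalues (with multiplicity) in any interval of length less than $\Delta S$, this single $n$-fold eigenvalue exhausts the budget, so every other eigenvalue of $S'$ lies at distance at least $\Delta S$ from $\lambda$. Hence $\Delta S'_t \geq \sqrt{|\lambda - t|\,(\Delta S - |\lambda - t|)}$, and one simply optimizes over the placement of $\lambda$ --- no one-parameter family, no averaging. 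Repeating this with the sharper $n=1$ gap estimate of Theorem~\ref{thm:minun} (distance $\geq 2\Delta S$ between distinct eigenvalues) gives the $n=1$ case as well. The key missing ingredient in your proposal is precisely this full-multiplicity extension.
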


\begin{proof}
    It is known that if
$\la $ is a regular point of a symmetric operator $S$ with deficiency indices $(n,n)$,
then there exists a self-adjoint extension of $S$ for which $\la $ is an eigenvalue of
multiplicity $n$ (\cite{Glazman}, pg. 109).  Given any $\la $
for which $|\la -t | < \Delta S $, let $S'$ be the self-adjoint extension of $S$ for
which $\la $ is an eigenvalue of multiplicity $n$. By Theorem \ref{thm:minun}, if $\mu \neq \la $
belongs to $\sigma (S')$, it must be that $|\mu -t | \geq \Delta S - |\la -t | $. Again, by
formula (\ref{eq:minunt}),  $\Delta S' _t \geq  \sqrt{|\la -t | \Delta S - (\la -t)^2 } $. It is
a simple calculus exercise to show that this is maximized when $|\la -t| = \frac{\Delta S}{2}$.
Choosing $\la$ so that this condition is satisfied proves the first part of the claim.

    Using the result of  Theorem \ref{thm:minun}
in the case where $n=1$, and repeating the above arguments shows that, in the case where $n=1$,
 $\max _{S' \in \mf{S}} \Delta S' _t \geq \Delta S$. By Theorem \ref{thm:sreg},
$\Delta S_t \geq \max _{S' \in \mf{S}} \Delta S' _t$, so that $\Delta S_t \geq \max _{S' \in \mf{S}}
\Delta S' _t \geq \Delta S$. Taking the infimum over $t\in \bm{R}$ of both sides yields
$\Delta S = \inf _{t\in \bm{R} } \max _{S' \in \mf{S}} \Delta S' _t$. Combining this with formula
(\ref{eq:minunt}) now yields the formula (\ref{eq:better}).
\end{proof}

    If the improved result of Theorem \ref{thm:minun} that holds for the $n=1$ case
could be established for all values of $n$,
then the stronger result of the above theorem for the $n=1$ case would also hold for all $n$.

\section{Outlook}

Our new results on operators whose uncertainty is bounded from below are
of potential interest in quantum gravity. This is because our results
improve on the results of \cite{ak-first-sampling}, where it was first
pointed out that physical fields in theories with a finite lower bound on
spatial resolution, $\Delta x$, possess the so-called sampling property: a
field is fully determined by its amplitude samples taken at the
eigenvalues of any one of the self-adjoint extensions of $x$. With this
ultraviolet cutoff, physical theories can therefore be written,
equivalently, as living on a continuous space, or as living on any one of
a family of discrete lattices of points. This provides a new approach to
reconciling general relativity's requirement that spacetime be a
continuous manifold with the fact that quantum field theories tend to be
well-defined only on lattices. This approach has been extended to quantum
field theory on flat and curved space see \cite{ak-flatandcurved}. Representing
Hamiltonians as symmetric operators with unequal deficiency indices may
also be of physical significance, since the co-isometric time evolution of a quantum
system generated by such a Hamiltonian could be useful for describing information
vanishing beyond horizons, \emph{e.g} the horizon of a black hole \cite{ak-shortdist}.

Finally, we remark that our results are also of potential interest in
information theory: The theory of spaces of functions which are determined
by their amplitudes on discrete points of sufficient density is a
long-established field, called sampling theory. Sampling theory plays a
central role in information theory, where it serves as the crucial link
between discrete and continuous representations of information, see
\cite{shannon}. Our results on the relationship between the varying
uncertainty bound $\Delta S_t$ and varying density of the eigenvalues of
the self-adjoint extensions of $S$ therefore contribute new tools for
handling the difficult case of sampling and reconstruction with a variable
Nyquist rate.

\pagebreak


\begin{thebibliography}{00}

\bibitem{grossetal} D. J. Gross, P. F. Mende, \it String theory beyond the
Planck scale, \rm Nucl. Phys. {B303}, 407 (1988), D. Amati, M. Ciafaloni,
G. Veneziano, \it Can Space-Time be probed beyond the String size?, \rm
Phys. lett. {B 216} 41 (1989), L. J. Garay, \it Quantum gravity and
minimum length, \rm Int. J. Mod. Phys. {\bf A10}, 145 (1995),
 E. Witten, \it Reflections on the fate of space-time, \rm
 Phys. Today, \bf 49, \rm (4), 24 (1996),
G. Amelino-Camelia, J. Ellis, N. E. Mavromatos, N. Nanopoulos, \it
Planckian scattering and black holes, \rm Mod. Phys. lett. {A12}, 2029
(1997), gr-qc/9806028.

\bibitem{ak-earliest} A. Kempf, \it Quantum group-symmetric fock spaces with bargmann-fock
representation, \rm Lett. Math. Phys. {\bf 26}, 1 (1992), A. Kempf, \it
Uncertainty relation in quantum mechanics with quantum group symmetry, \rm
J. Math. Phys., {\bf 35}, 4483 (1994), hep-th/9311147, A. Kempf, \it
Quantum field theory with nonzero minimal uncertainties in positions and
momenta, \rm J. Math. Phys., {\bf 38}, 1347 (1997), hep-th/9405067.

\bibitem{earliest-undeformed} A. Kempf, G. Mangano, R.B. Mann, \it Hilbert space representation
of the minimal length uncertainty relation, \rm Phys. Rev. D 52, 1108
(1995), A. Kempf, G. Mangano, \it Minimal length uncertainty relation and
ultraviolet regularization, \rm Phys. Rev. {\bf D55}, 7909 (1997).

\bibitem{brau} F. Brau, \it Minimal length uncertainty relation and the
hydrogen atom, \rm  J. Phys. {\bf A32}, 7691 (1999).

\bibitem{bh-infl} R. Brout, Cl. Gabriel, M. Lubo, Ph. Spindel, \it Minimal
length uncertainty principle and the trans-Planckian problem of
black hole physics, \rm Phys. Rev. {\bf D59}, 044005 (1999), D. V.
Ahluwalia, \it Wave-particle duality at the Planck scale: freezing
of neutrino oscillations, \rm Phys. Lett. {\bf A275} 31 (2000),
gr-qc/0002005, A. Kempf, \it Mode generating mechanism in
inflation with a cutoff, \rm Phys. Rev. {\bf D63}, 083514 (2001).

\bibitem{Glazman} N. I. Akhiezer, I. M. Glazman, \it Theory of linear operators in
Hilbert space, \rm Dover Publ., Mineola, N.Y. (1993).

\bibitem{shift} M. Rosenblum, J. Rovnyak, \it Hardy classes and operator theory, \rm Courier
Dover Publications, Mineola, N.Y. (1997).

\bibitem{Naimark} M.A. Naimark {\it Linear Diffential
Operators in Hilbert Space, Part II}, \rm Frederick Ungar
Publishing Co., New York, NY (1968).

\bibitem{ak-first-sampling} A. Kempf, \it Fields over unsharp coordinates, \rm Phys. Rev.
Lett. {\bf 85} 2873 (2000), hep-th/9905114, A. Kempf, \it On fields with
finite information density, \rm Phys. Rev. {\bf D69}, 124014 (2004),
hep-th/0404103.

\bibitem{ak-flatandcurved} A. Kempf, \it
Covariant Information-Density Cutoff in Curved Space-Time, \rm Phys. Rev.
Lett. {\bf 92}, 221301 (2004), gr-qc/0310035; A. Kempf, R. Martin, \it
On Information Theory, Spectral Geometry and Quantum Gravity, \rm Phys.
Rev. Lett. {\bf 100}, 021304 (2008).

\bibitem{ak-shortdist} A. Kempf, \it On the only three Short Distance Structures which can be described by Linear Operators,
 \rm Rept.Math.Phys. {\bf 43}, 171-177 (1999), hep-th/9806013.


\bibitem{shannon} C.E. Shannon, \it Communication in the presence of
noise, \rm \emph{Proc. IRE}, {\bf 37} 10 (1949), J.J. Benedetto, P.J.S.G.
Ferreira, {\it Modern Sampling Theory,} Birk\"auser (2001).

\end{thebibliography}
\end{document}